\title[Meta-Learning for Adaptive Control with Automated Mirror Descent]{Meta-Learning for Adaptive Control with Automated Mirror Descent}
\newcommand{\R}{\mathbb{R}}
\newtheorem{assumption}{Assumption}
\newcommand{\argmin}{\mathop{\mathrm{argmin}}}
\begin{document}

\maketitle

\begin{abstract}%
    Adaptive control achieves concurrent parameter learning and stable control under uncertainties that are linearly parameterized with known nonlinear features. Nonetheless, it is often difficult to obtain such nonlinear features. To address this difficulty, recent progress has been made in integrating meta-learning with adaptive control to learn such nonlinear features from data. However, these meta-learning-based control methods rely on classical adaptation laws using gradient descent, which is confined to the Euclidean geometry. In this paper, we propose a novel method that combines meta-learning and adaptation laws based on mirror descent, a popular generalization of gradient descent, which takes advantage of the potentially non-Euclidean geometry of the parameter space. In our approach, meta-learning not only learns the nonlinear features but also searches for a suitable mirror-descent potential function that optimizes control performance. Through numerical simulations, we demonstrate the effectiveness of the proposed method in learning efficient representations and real-time tracking control performance under uncertain dynamics.
\end{abstract}

\begin{keywords}%
  Meta-learning, Learning-based Control, Adaptive control, Mirror descent
\end{keywords}

\section{Introduction}
Autonomous systems frequently operate in complex and uncertain environments to achieve designated objectives. For example, unmanned aerial vehicles (UAVs) can be used to perform aerial imaging tasks in wildfire management, where UAVs need to navigate a pre-specified path to gather valuable data under the influence of unpredictable strong wind. In these complex environments, dynamical systems often experience unmodelled disturbances which present challenges to achieving real-time control objectives such as tracking a reference trajectory. Under the setting where the uncertain disturbance can be linearly parameterized as a linear combination of known nonlinear features and an unknown parameter vector, extensive research has been conducted in the adaptive control literature to tackle this challenge. In essence, adaptive control approaches simultaneously learn the unknown parameter and ensure asymptotic convergence to a pre-specified reference trajectory, while providing stability guarantees \citep{Narendra05,hovakimyan2010L1,slotine1991applied}. However, choosing good nonlinear features for the required linear parameterization can be difficult. 

Recently, many methods that incorporate machine learning into autonomous system control in the presence of uncertain disturbance have been proposed in the literature. In particular, meta-learning has been explored for system identification \citep{o2022neural} and control purposes \citep{richards2023control, rajeswaran2016epopt, kumar2021rma, musavi2023convergence}. 
Conceptually, meta-learning emphasizes the idea of ``learning-to-learn'' \citep{hospedales2021meta}, which aims to optimize a meta-objective across a number of different tasks. Similarly, from a task-distribution perspective, assuming that the disturbance can be linearly parameterized, controlling the system under the disturbance generated by each parameter vector can be viewed as a distinct task. The meta-objective can then be stated as discovering a control design shared by all tasks that achieves good control performance overall, despite variations in parameter values. These meta-learning-based approaches can be broadly classified into two categories: those built upon a reinforcement learning (RL) framework to meta-train a policy that maximizes a meta-objective defined as the expected reward \citep{finn2019online, rajeswaran2016epopt, kumar2021rma}, and those utilizing meta-learning to learn nonlinear features and construct an adaptive controller based on such learned features \citep{o2022neural, richards2023control}. Although RL-based frameworks can accommodate purely data-driven models, methods that incorporate meta-learning and adaptive control follow a more principled approach and provide fast online adaptive control implementations \citep{richards2023control}. 

Current approaches incorporating meta-learning into adaptive control are based on classical adaptation laws using gradient descent (GD). Proposed in \citep{lee2018natural} for robotic manipulators, a new adaptive control design leveraging adaptation laws based on mirror descent (MD) was generalized to a broad class of nonlinear systems in \citep{boffi2021implicit}. Compared to classic GD-based adaptation, the MD-based adaptation can exploit the non-Euclidean geometry of the parameter space. More specifically, by replacing the commonly used Euclidean distance in Lyapunov functions for GD-based adaptation with the more general Bregman divergence \citep{bregman1967relaxation}, MD-based adaptation ``implicitly regularizes'' the learned parameters with respect to a certain norm~\citep{azizan2019stochastic,azizan2021stochastic,sun2023unified}. Nonetheless, the effect of implicit regularization depends on the choice of the Bregman divergence function, which is non-trivial as the geometry of the unknown parameter vector space can be complicated and interconnected with the choice of nonlinear features.

In this paper, we propose a method that automates the choice of the MD-based adaptation law through meta-learning. The main contribution of the proposed framework is that we utilize meta-learning to concurrently optimize the pair of Bregman divergence choice and the nonlinear features, to achieve a meta-objective of reference tracking control. This allows for more freedom in choosing adaptation laws that yield better control performance. We have also conducted numerical experiments that demonstrate significant control performance improvement compared to our previous work in \citep{richards2023control}, which employs meta-learning to only learn the nonlinear features.


The paper is organized as follows. In Section~\ref{sec:prelim}, we provide relevant background information for meta-learning and adaptive control. 
In Section~\ref{sec:method}, we propose a method that utilizes meta-learning to automate the choice of MD-based adaptation law while achieving the meta-objective of reference tracking control. In Section~\ref{sec:theory}, we state the main theoretical results and the stability guarantees provided by our meta-learning-based adaptive controller. In Section~\ref{sec:simulation}, we present numerical simulation results that demonstrate the effectiveness of our approach on a planar quadrotor. 
In the interest of space, additional details on our method and experimental results can be found in the appendix of an extended version of our paper~\citep{tang2024meta}.

\section{Preliminaries}\label{sec:prelim}

\subsection{Meta-Learning}\label{prelim:meta_learning}
While machine learning has been widely successful in recent years, many learning algorithms require large amounts of training data and perform poorly on new tasks with limited data. To alleviate this issue, meta-learning promotes the idea of ``learning-to-learn,'' which describes the process of adapting a learning algorithm to a variety of tasks.
Meta-learning has been considered in multiple domains, such as few-shot adaptation for a new task based on priors learned from previous tasks \citep{finn2019online,almecija2022uncertainty}, hyperparameter optimization \citep{franceschi2018bilevel}, and improving generalization by searching for the algorithm that best suits a problem family \citep{thrun1998learning}.
We refer interested readers to a survey paper in \citep{hospedales2021meta} for an in-depth overview.
In this paper, we consider meta-learning in a multi-task setting and formulate it as a bi-level optimization problem, similar to \citep{hospedales2021meta}. In a standard machine learning framework, one seeks to identify the optimal parameter vector, $\theta^*$, which minimizes the real-valued loss function $l(\theta; D)$, where $D$ denotes the dataset and $\theta$ represents the model parameters. From a task-distribution view, meta-learning algorithms consist of a meta-learner which contains generic knowledge across all tasks, and a base-learner that adapts itself to each different task. More specifically, we consider a scenario where there are $M$ different tasks, corresponding to a family of loss functions $\{l_j\}_{j=1}^M$ and training datasets $\{D_j\}_{j=1}^M$. Furthermore, we partition each dataset into a training set and a validation set, i.e., $D_j = D_j^{\text{train}} \cup D_j^{\text{val}}$, where the former is used for the base-learner and the latter is used for the meta-learner. Under these notations, the training process of meta-learning can be formulated as the following bi-level optimization problem,
\begin{subequations}\label{equ:meta_formulate}
    \begin{align}
        \min_{\theta}\quad &\frac{1}{M}\left[ \sum_{j=1}^M l_j(\omega_j, D_j^{\text{val}}) + \mu_{\text{meta}} \|\theta\|^2_2 \right], \label{equ:meta_learn}\\
        \text{s.t. } \omega_j &= \mathcal{B}(\theta, D_j^{\text{train}}), \label{equ:base_learn}
    \end{align}
\end{subequations}
where $\mathcal{B}(\theta, D_j^{\text{train}})$ is the base-learner algorithm that takes the task-agnostic (meta) parameters $\theta$ and outputs task-specific parameter $\omega_j$ that adapts to the $j$th task. We choose a general algorithmic description for the base-learner in \eqref{equ:base_learn} because it is more suitable for our control-oriented downstream task, which will be explained in detail in Section~\ref{sec:method}. 
The goal of the learning problem in \eqref{equ:meta_formulate} is to find a meta parameter $\theta^*$ that, after adaption by the base-learner, provides the best average loss across all $M$ tasks.



\subsection{Adaptive Control with Mirror-Descent Adaptation}\label{sec:adap_ctrl_MD}

Consider a nonlinear control-affine system with unknown disturbance,
\begin{equation}\label{equ:affine_system}
    \dot{x} = f(x) + g(x)(u+f_{\text{ext}}(x)), 
\end{equation}
where $x \in \R^n$ is the state and $u \in \R^m$ is the control input, $f: \R^n \to \R^n$ and $g: \R^m \to \R^n$ are known functions. The function $f_{\text{ext}}: \R^n\to \R^m$ is an unknown external disturbance, known as ``matched uncertainty''  \citep{slotine1991applied}, as it can be directly canceled by the input if known a priori.

For the class of systems in \eqref{equ:affine_system}, the adaptive control literature \citep{slotine1991applied, narendra1987new} has established methods to achieve reference trajectory tracking under certain assumptions. Given a reference trajectory $x_r(t)$, in the absence of disturbance ($f_{\text{ext}} = 0$), a nominal control design $\overline{u}(t)$ can be chosen to achieve asymptotic tracking, i.e., $x(t) \to x_r(t)$ as $t\to \infty$. For the closed loop nominal system without disturbance under $\overline{u}(t)$, a control Lyapunov function for the nominal system, $\overline{V}(x)$, can be found to ensure tracking error converging to $0$ asymptotically. Under the assumption that the unknown disturbance can be linearly parameterized as $f_{\text{ext}}(x) = Y(x) a$, where $Y: \R^n \to \R^{m\times d}$ are known nonlinear feature functions and $a \in \R^d$ is an unknown vector, a standard adaptive control design which relies on both the nominal controller $\overline{u}(t)$ and the Lyapunov function $\overline{V}(x)$, can be described as follows,
\vspace{-0.25em}
\begin{subequations}\label{equ:adap_ctrl_GD}
    \begin{align}
        u &= \overline{u} - Y(x) \hat{a}, \label{equ:adap_ctrl_law_GD}\\
        \dot{\hat{a}} &= \Gamma^{-1} Y(x)^T g(x)^T \nabla_x \overline{V}(x, x_r), \label{equ:adaptation_GD}
    \end{align}
\end{subequations}
where $\Gamma \in \R^{d\times d}$ is a positive definite matrix and $\hat{a}$ is the parameter estimate. Using a Lyapunov-like function that accounts for parameter estimation error $\hat{a} - a$,
\begin{equation}\label{equ:Lyapunov_GD}
    V(x, x_r, \hat{a}, a) = \overline{V}(x, x_r) + \frac{1}{2}(\hat{a}-a)\Gamma(\hat{a}-a),
\end{equation}
the adaptive controller in \eqref{equ:adap_ctrl_GD} can be shown to ensure $\lim_{t\to\infty} x_r(t) - x(t) = 0$. The controller \eqref{equ:adap_ctrl_GD} is a ``gradient-based'' adaptive controller, as it uses $\nabla_x \overline{V}$ to construct its adaptation law. We refer the reader to Appendix~\ref{append:gd_adaptive} for more details about this controller. 

A generalization of the adaptation law \eqref{equ:adaptation_GD} was proposed for robotic manipulator systems in \citep{lee2018natural} and generalized to a broader class of nonlinear systems in \citep{boffi2021implicit}. By replacing the quadratic parameter estimation error term in \eqref{equ:Lyapunov_GD} with a more general Bregman divergence function \citep{bregman1967relaxation}, $d_{\psi}(y||x) = \psi(y) - \psi(x) - (y-x)^T\nabla \psi(x)$, a new pair of ``mirror-descent'' based adaptation law and their corresponding Lyapunov-like function are chosen as:
\begin{equation}\label{equ:adap_ctrl_law_NGD}
    \dot{\hat{a}} = -P^{-1} (\nabla^2 \psi(P\hat{a}))^{-1} P^{-1} Y(x)^T g(x)^T \nabla_x \overline{V}(x, x_r),
\end{equation}
\begin{equation}\label{equ:Lyapunov_NGD}
    V(x, x_r, \hat{a}, a) = \overline{V}(x, x_r) + d_{\psi}(Pa || P\hat{a}),
\end{equation}
where $P = P^T \succ 0$, and $\psi: \R^p \to \R$ is a strictly convex potential function.  While the Bregman divergence $d_{\psi}(\cdot||\cdot)$ is in general not a metric, it can be thought of as a generalization of the notion of ``distance'' over the Euclidean distance.
In particular, $d_{\psi}(y||x)$ measures the difference between a strictly convex potential function $\psi$ and its linear approximation around $x$. One can use direct computation to show that $V$ can be used as a stability certificate for $x(t) \to x_r(t)$ as $t\to\infty$.

\begin{remark}
    Suppose $\Gamma = P^T P$. When $\psi(x) = \frac{1}{2}x^T x$, the Bregman divergence function in \eqref{equ:Lyapunov_NGD} is reduced to $d_{\psi}(Pa||P\hat{a}) = (a-\hat{a})^T \Gamma (a-\hat{a})$. As a result, the adaptation law in \eqref{equ:adap_ctrl_law_GD} is a gradient descent-based special case of the mirror descent-based adaptation law in \eqref{equ:adap_ctrl_law_NGD}.
\end{remark}

The mirror descent-based adaptation law \eqref{equ:adap_ctrl_law_NGD} in fact exhibits an ``implicit regularization'' property that leads to a different solution than the adaptive law \eqref{equ:adaptation_GD}.
Proposition 3.1 of \citep{boffi2021implicit} showed that under the adaptation law \eqref{equ:adap_ctrl_law_NGD}, the limit of the parameter estimate $\hat{a}$ satisfies
\[ \lim_{t\to\infty} \hat{a}(t) = \argmin_{a \in \mathcal{A}} d_\psi(Pa || P\hat{a}(0)),\]
where $\mathcal{A} = \{a : f_{\text{ext}}(x(t)) = Y(x(t))a\}$ is the set of parameters that interpolates the disturbance along the current trajectory.
Therefore, the mirror descent-based adaption law updates the parameter estimates according to a geometry induced by the potential function $\psi$.

\vspace{-1em}
\section{Problem Formulation} \label{sec:formulation} 

In this paper, we consider the tracking control problems for dynamical systems under uncertain disturbances. In particular, we focus on a class of nonlinear systems that can be described by the following manipulator equation \citep{underactuated},
\begin{equation}\label{equ:man_eq}
    M(q) \Ddot{q} + C(q, \dot{q})\dot{q} + g(q) = \tau_{q, \dot{q}}(u) + f_{\text{ext}}(q, \dot{q}; w),
\end{equation}
where $M, C, g, \tau$ are known functions, $f_{\text{ext}}$ is an unknown disturbance function that depends on a disturbance parameter $w$, $q, \dot{q} \in \R^n$ are the states and $u\in \R^m$ is the control input. Additionally, we make the following assumptions about the system in \eqref{equ:man_eq}.

\vspace{-0.5em}

\begin{assumption}\label{assum:parameterization}
    The disturbance can be linearly parameterized as $f_{\text{ext}}(q, \dot{q}; w) = Y(q, \dot{q}) a$, where $Y: \R^{2n} \to \R^{n\times d}$ is an \textbf{\textit{unknown}} feature function and $a \in \R^d$ is a vector of unknown parameters.
\end{assumption}

\vspace{-1em}

\begin{assumption}\label{assum:skew-symmetric}
    $\dot{M}(q, \dot{q}) - 2C(q, \dot{q})$ is skew-symmetric.
\end{assumption}

\vspace{-1em}

\begin{assumption}\label{assum:fully-actuated}
    The system is ``fully actuated'', i.e., $\tau_{q, \dot{q}}: \mathbb{R}^{m} \to \mathbb{R}^{n}$ is surjective and invertible.
\end{assumption}

\vspace{-0.5em}

Compared to a standard adaptive control setting where the nonlinear feature function $Y(q, \dot{q})$ is known, Assumption~\ref{assum:parameterization} does not assume knowledge of $Y(q, \dot{q})$. The parameter $a$ may depend on the disturbance parameter $w$. Note also that Assumption~\ref{assum:skew-symmetric} follows from standard manipulator control literature, which is satisfied by many mechanical systems \citep{slotine1991applied}. 

Although Assumption~\ref{assum:fully-actuated} is not strictly necessary for our proposed approach to apply, it simplifies the problem setting and allows for an explicit control design with stability results. Furthermore, Assumption~\ref{assum:fully-actuated} reduces \eqref{equ:man_eq} to a special case of the control-affine system with matched uncertainty in \eqref{equ:affine_system}. Therefore, studying control designs for the manipulator equation and their stability guarantees provides a concrete example for generalizing our approach to control-affine systems in \eqref{equ:affine_system}.

In addition to the reference tracking control goals, we are particularly interested in leveraging the MD-based adaptive controller \eqref{equ:adap_ctrl_law_NGD}, which takes advantage of the geometry of the parameter space via the potential function $\psi$ in \eqref{equ:Lyapunov_NGD}. However, with the reference tracking control objective in mind, it is not clear what choice of $\psi$ and nonlinear features $Y(q, \dot{q})$ leads to the best control performance. The broader objective of this paper is to leverage the meta-learning framework to concurrently learn a suitable pair of potential function $\psi$ and nonlinear features $Y(q, \dot{q})$ from data that can achieve optimal tracking performance.


\vspace{-1em}

\section{Proposed Method: Adaptive Control with Meta-Learned Bregman Divergence}\label{sec:method}

\vspace{-0.25em}
Although recent meta-learning for control approaches in \citep{o2022neural} and \citep{richards2023control} have been proposed to address tracking control problems in this setting, these methods rely on the classical adaptation law \eqref{equ:adap_ctrl_law_GD} using gradient descent. As discussed in Section~\ref{sec:adap_ctrl_MD}, mirror-descent-based adaptation laws \eqref{equ:adap_ctrl_law_NGD} can be viewed as a strict generalization of the classical adaptation law by replacing the quadratic estimation error term in the Lyapunov function with a Bregman divergence. In this paper, we propose to leverage this generalization by automatically choosing the Bregman divergence through meta-learning. This allows us to further leverage the potentially non-Euclidean geometry of the parameter space, and it will only improve the tracking performance. 

Designing a controller $u(t)$ for the system in \eqref{equ:man_eq} under unknown disturbance $f_{\text{ext}}(q, \dot{q}; w)$ leads to a natural multi-task setting for meta-learning. Consider a set of disturbance parameters $\{w_j\}_{j=1}^M$, under Assumption~\ref{assum:parameterization}, there exists a feature function $Y(q, \dot{q})$ where we can find a set of parameter vector $\{a_j\}_{j=1}^M$ such that $f_{\text{ext}}(q, \dot{q}; w_j) = Y(q, \dot{q}) a_j$. If we view designing a tracking controller $u(t)$ for the dynamics under disturbance $f_{\text{ext}}(q, \dot{q}; w_j), j = 1, 2, ..., M$ as $M$ different tasks, then the meta-learning problem can be stated as constructing a meta-learner that learns generic knowledge across all tasks, and a base-learner that distills the task-agnostic knowledge of each task.

As discussed in Section~\ref{sec:adap_ctrl_MD}, given a feature function $Y$ and a controller $\Bar{u}$ that stabilizes the nominal dynamics, adaptive controllers with mirror-descent-based adaptation laws are able to achieve the tracking goal. If we use the MD-based adaptive controllers as base-learners, then the meta-learner needs to learn a suitable MD potential function $\psi$ and a feature function $\hat{Y}(q, \dot{q})$ that achieves good tracking performance under different disturbance $w_j$. In particular, we construct a neural network to learn unknown features $Y(q, \dot{q})$ from data, denoted as $\hat{Y}(q, \dot{q}; \theta_Y)$ where $\theta_Y$ are the weights of the network. As illustrated in Figure~\ref{fig:diagram}, we shall formalize the MD-based adaptive controller for the manipulator equation in \eqref{equ:man_eq}, and introduce the bi-level optimization formulation for our proposed meta-learning approach.

\begin{figure}[!ht]
    \centering
    \includegraphics[width=0.8\linewidth]{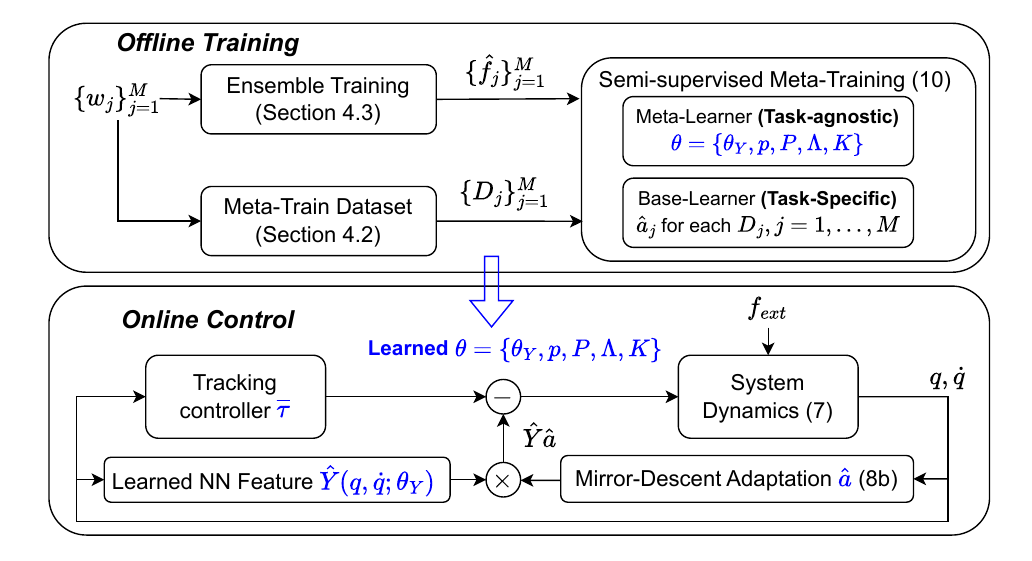}
    \caption{An illustrative diagram that shows the offline meta-learning and online adaptive control components of our proposed method. The online control components that used offline-learned parameters $\theta$ are highlighted in \textcolor{blue}{blue}.}
    \label{fig:diagram}
\end{figure}

\subsection{Base-Learner: Adaptive Control with Mirror-Descent-based Adaptation}
Given a twice-differentiable reference trajectory $q_r(t)$, following the adaptive control design for robotic manipulators introduced in \citep{slotine1987adaptive}, we define the sliding variable $s = \dot{\tilde{q}} + \Lambda \tilde{q}$ and the virtual reference trajectory $q_v:=q_r - \Lambda\int_0^t \tilde{q}(\tau) d\tau$, where $\tilde{q}(t) = q(t) - q_r(t)$ is the tracking error at time $t$, and $\Lambda \in \R^{n\times n}$ is positive definite. We propose the following MD-based adaptive controller for the system in \eqref{equ:man_eq}:
\begin{subequations}\label{equ:man_adap}
    \begin{align}
        \overline{\tau} &= M \ddot{q}_v + C \dot{q}_v + g - K s,
        \quad u = \tau_{q, \dot{q}}^{-1}(\overline{\tau} - \hat{Y}(q, \dot{q}; \theta_Y)\hat{a}), \label{equ:man_adap_ctrl}\\
        \dot{\hat{a}} &= -P^{-1} (\nabla^2 \psi(P\hat{a}))^{-1} P^{-1} \hat{Y}(q, \dot{q}; \theta_Y)^T s \label{equ:man_adap_law},
    \end{align}
\end{subequations}
where $K \in \R^{n\times n}$, $P \in \R^{d\times d}$ are both positive definite matrices. The associated Lyapunov-like function is chosen as
\begin{equation}
\label{equ:lyapunov-md}
    V = \frac{1}{2}s^T M(q)s + d_{\psi}(Pa || P\hat{a}).
\end{equation}
In particular, we choose a subclass of Bregman divergence functions defined by $\psi(a) = \|a\|_p^p = \sum_{i=1}^d |a_i|^p$, where $p\geq 1$ is a positive number, and $a_i$ is the $i$-th element of the vector $a$. Note that $\psi$ is the $\ell_p$-norm raised to the power of $p$, which provides a generalized distance metric compared to the classical adaptation laws in \eqref{equ:adaptation_GD} that is equivalent to choosing $p=2$. 

We focus on this family of Bregman divergences for two reasons. First, it provides a parameterized class of Bregman divergences, which reduces the computationally intractable task of searching for a general Bregman divergence. Second, this allows for an efficient implementation of the update rule \eqref{equ:adap_ctrl_law_NGD}, a property that has been successfully utilized in other learning problems~\citep{azizan2021stochastic,sun2023unified}. 
As previously discussed in Section~\ref{sec:adap_ctrl_MD}, adaptive controllers that use mirror-descent-based adaptations implicitly regularize the learned parameters $\hat{a}$. Thus, the optimal choice of $p$ should depend on the geometry of the parameter space. Since choosing $p$ is a non-trivial task that affects downstream control performance, we search for an optimal $p$ using the meta-learner.

Given a fixed disturbance $f_{\text{ext}}(q, \dot{q}; w_0) = Y(q, \dot{q})a_0$ with known feature function $Y(q, \dot{q})$, the controller in \eqref{equ:man_adap} is guaranteed to achieve asymptotic tracking. Formal theoretical guarantees under the meta-learning setting where $Y(q, \dot{q})$ is approximated by a neural network $\hat{Y}(q, \dot{q}; \theta_Y)$ will be discussed in Section~\ref{sec:theory}.

\subsection{Bi-level Meta-Training}\label{sec:meta-training}
Recall that the multi-task meta-learning setting considered in this paper can be formulated as a bi-level optimization problem, with its general form described in \eqref{equ:meta_formulate}. In the context of tracking control problem, each task is the control design under a disturbance model $f_{\text{ext}}(q, \dot{q}; w)$ specified by $w$. Now that we have constructed a base-learner using MD-based adaptive control, we first summarize the desired objectives for the meta-learner and introduce the bi-level meta-training problem.

For each task specified by $w_j$, the base-learner corresponds to an MD-based adaptive controller that achieves both reference trajectory tracking and identifies a suitable $\hat{a}_j$ for control purposes. The meta-parameters shared by the base-learner across different tasks include the following: the weights $\theta_Y$ in the neural network $\hat{Y}(q, \dot{q}; \theta_Y)$ for feature function approximation; the $\ell_p$-norm parameter $p$ for the MD-based adaptation; control design gain matrices $P, \Lambda, K$. Denote $\theta = \{\theta_Y, p, P, \Lambda, K\}$ as the meta-learner variables. The meta-learner tries to find the optimal $\theta$ that achieves the best average performance over all tasks.

To better explore the state space, given a disturbance parameter $w_j$, we can construct a dataset $D_j = \{w_j, \{q_r^{(ij)}(t), t \in [0, T]\}_{i=1}^N\}$ that includes the pair of $w_j$ and $N$ reference trajectories rolled out over a time window $[0, T]$ under the disturbance $f_{\text{ext}}(q, \dot{q}; w_j)$. Given $M$ different disturbance parameters $\{w_j\}_{j=1}^M$, the entire dataset can be constructed as $D = \{D_j\}_{j=1}^M$. Since the goals of the meta-learner and the base-learner are both tracking reference trajectories, we use the same dataset for training and validation, i.e., $D_j = D_j^{\text{val}} = D_j^{\text{train}}$ in \eqref{equ:meta_learn}. The loss function for each task $l_j$ is written as the sum of the average tracking error for a closed-loop system formed with the MD-based adaptive controller and squared $\ell_2$-norm of its control input. The additional regularization term prevents the use of high-gain control, which is common in control designs \citep{underactuated}. Compared to regression-oriented meta-learning methods such as \citep{o2022neural}, our proposed meta-learning approach is control-oriented, which defines the meta-objective as overall tracking performance and treats estimation on a ``need-to-know'' basis.

Now that the dataset and task loss have been constructed, we formulate the proposed meta-learning approach as the following bi-level optimization problem: 
\begin{subequations}\label{equ:meta_learning_control}
    \begin{align}
        \min_{\theta}\quad \frac{1}{M} &\left(\sum_{j=1}^M l_j \right) + \mu_{\text{meta}}\|\theta\|_2^2,\\
        \text{s.t.} \quad\quad\quad
        l_{j} &= \frac{1}{N}\left(\sum_{i=1}^N \frac{1}{T} \int_0^T (\|q^{ij}(t) - q^{ij}_r(t)\|_2^2 + \mu_{\text{ctrl}} \|u_{ij}(t)\|_2^2 dt\right),\\
        \ddot{q}^{(ij)} &= M^{-1}(\tau(u) + f_{\text{ext}}(q^{(ij)}, \dot{q}^{(ij)}; w_j) - C\dot{q}^{(ij)} - g), \label{equ:meta_learning_ODE}\\
        u^{(ij)}(t) &= \rho(q^{(ij)}(t), \dot{q}^{(ij)}(t), \theta),\\
        \dot{q}^{(ij)}(0) &= \dot{q}_r^{(ij)}(0), \quad q^{(ij)}(0) = q_r^{(ij)}(0)
    \end{align}
\end{subequations}
where $\rho(q^{(ij)}(t), \dot{q}^{(ij)}(t), \theta)$ refers to the MD-based adaptive controller given the choice of meta-learner parameters $\theta$ in \eqref{equ:man_adap}, trajectory time dependencies are abbreviated such as reducing $q^{(ij)}(t)$ to $q^{(ij)}$. In addition, the adaptation laws in \eqref{equ:man_adap_law} are initialized at $\hat{a}_j(0) = 0$ for all tasks for simplicity.

To search for the optimal parameter $\theta$, similar to most machine learning settings, we can perform stochastic gradient and back-propagation on \eqref{equ:meta_learning_control}. The task loss $l_j$ depends on a trajectory simulated using the continuous-time ODE in \eqref{equ:meta_learning_ODE}, which involves back-propagation through chosen ODE solvers such as the method presented in \citep{zhuang2020adaptive}.

\subsection{Model Ensembles}\label{sec:ensemble}

In the bi-level optimization problem, to compute each task loss $l_j$, we need forward integration of the ODE in \eqref{equ:meta_learning_ODE}, which requires access to the disturbance $f_{\text{ext}}(q, \dot{q}; w)$. During the offline meta-training process, we cannot assume explicit knowledge of the disturbance model $f_{\text{ext}}(q, \dot{q}; w)$ given a sampled disturbance parameter $w$. Inspired by approaches in the reinforcement learning literature such as \citep{clavera2018model}, we proceed to train an ensemble of neural network models that serve as surrogate disturbance models in place of $f_{\text{ext}}(q, \dot{q}; w)$ for meta-training purposes in \eqref{equ:meta_learning_control}.
Using the same the same set of sampled disturbance parameters $\{w_j\}_{j=1}^M$ used for meta-training, we sample $M$ trajectories to fit a set of surrogate noise models $\{\hat{f}_j\}_{j=1}^{M}$. The details of this step can be found in Appendix~\ref{sec:ensemble-training}.


\section{Theoretical Results: Stability Guarantees}\label{sec:theory}
In this section, we present the main theoretical guarantee. Specifically, we show that the tracking error converges asymptotically to a compact set assuming the neural network approximation error of feature $Y(q, \dot{q})$ is uniformly bounded. The key step in proving this result involves showing that the Lyapunov function~\eqref{equ:Lyapunov_NGD} is negative definite, the details of which can be found in Appendix~\ref{sec:proof-main}.

\begin{theorem}\label{thm:main}
    Assume the error between the feature function $Y$ and its neural network approximation $\hat{Y}$ is bounded, and let $\delta:= \sup_{q, \dot{q}} \|Y(q, \dot{q}) - \hat{Y}(q, \dot{q})\|$. If we use $\hat{Y}$ in place of $Y$ in the adaptive controller \eqref{equ:man_adap} and apply it to the system in \eqref{equ:man_eq}, then the tracking error $\tilde{q}(t) = q(t) - q_r(t)$ converges to a compact set $\mathcal{S} = \{\tilde{q}: \|\tilde{q}\|\leq \gamma \frac{\delta \|a\|}{\lambda_{\text{min}}(K)}\}$ where $\gamma = \int_0^{\infty} \|\exp(-\tau\Lambda) \|d\tau$. In particular, if $\delta = 0$, the tracking error $q - q_r$ converges to $0$.
\end{theorem}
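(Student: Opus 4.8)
The plan is to analyze the time derivative of the Lyapunov-like function $V = \frac{1}{2}s^T M(q) s + d_\psi(Pa \,||\, P\hat{a})$ along the closed-loop trajectories generated by the controller \eqref{equ:man_adap} applied to \eqref{equ:man_eq}, but now accounting for the feature approximation error. First I would compute $\dot V$ in the idealized case where $\hat Y = Y$, using Assumption~\ref{assum:skew-symmetric} (so that $s^T(\dot M - 2C)s = 0$) together with the manipulator dynamics, the definition of the virtual reference $q_v$ and sliding variable $s = \dot{\tilde q} + \Lambda \tilde q$, and the adaptation law \eqref{equ:man_adap_law}. The standard cancellation should give $\dot V = -s^T K s + s^T(Y - \hat Y)a$, where the Bregman-divergence term's derivative $\frac{d}{dt} d_\psi(Pa\,||\,P\hat a) = -(P a - P\hat a)^T \nabla^2\psi(P\hat a)\, P\dot{\hat a}$ exactly cancels the parameter-error contribution $s^T Y(\hat a - a)$ once \eqref{equ:man_adap_law} is substituted (here the invertibility of $P$ and $\nabla^2\psi$ is used; since $\psi(a)=\|a\|_p^p$ is strictly convex for $p>1$, $\nabla^2\psi \succ 0$). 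The residual term $s^T(Y-\hat Y)a$ is then bounded in norm by $\|s\|\,\delta\,\|a\|$.

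Next I would use $\dot V \le -\lambda_{\min}(K)\|s\|^2 + \delta\|a\|\,\|s\|$ to conclude that $V$ (hence $\frac12 s^T M s$, hence $\|s\|$ by positive definiteness of $M$) is bounded, and more precisely that $\dot V < 0$ whenever $\|s\| > \delta\|a\|/\lambda_{\min}(K)$. By a standard invariant-set / ultimate-boundedness argument this shows $\limsup_{t\to\infty}\|s(t)\| \le \delta\|a\|/\lambda_{\min}(K)$. The final step is to translate a bound on the sliding variable $s$ into a bound on the tracking error $\tilde q$: since $s = \dot{\tilde q} + \Lambda\tilde q$, we have $\tilde q(t) = e^{-\Lambda t}\tilde q(0) + \int_0^t e^{-\Lambda(t-\tau)} s(\tau)\,d\tau$, and with $\tilde q(0)=0$ (from the stated initial conditions matching the reference), taking norms and using $\gamma = \int_0^\infty \|e^{-\tau\Lambda}\|\,d\tau$ yields $\limsup_{t\to\infty}\|\tilde q(t)\| \le \gamma\,\delta\|a\|/\lambda_{\min}(K)$, which is exactly the claimed set $\mathcal{S}$. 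When $\delta=0$ the residual term vanishes, $\dot V \le -\lambda_{\min}(K)\|s\|^2$, so $s\in L_2\cap L_\infty$ and (after checking $\dot s$ is bounded, invoking Barbalat's lemma) $s\to 0$, whence $\tilde q\to 0$ by the same convolution estimate.

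The main obstacle I anticipate is handling the potential-function term carefully: for $\psi(a)=\|a\|_p^p$ with $1 < p < 2$, the Hessian $\nabla^2\psi$ is unbounded (it blows up as components of $a$ approach zero) and for $p=1$ it is not even defined, so the adaptation law \eqref{equ:man_adap_law} and the cancellation $\frac{d}{dt} d_\psi = -s^T Y(\hat a - a)$ need to be justified with some care — either by restricting to $p>1$ and arguing that $\hat a(t)$ stays in a region where $\nabla^2\psi$ is well-behaved, or by invoking the general mirror-descent adaptive control framework of \citep{boffi2021implicit} which already establishes the requisite identity for strictly convex $\psi$. A secondary technical point is making the ultimate-boundedness conclusion rigorous (ensuring the trajectory does not escape in finite time, which follows from boundedness of $V$, and that the bound on $\|s\|$ is genuinely asymptotic), and confirming the convolution bound $\gamma < \infty$, which holds because $\Lambda \succ 0$ makes $e^{-\Lambda t}$ exponentially decaying.
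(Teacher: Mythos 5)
Your proposal follows essentially the same route as the paper's proof: differentiate $V = \tfrac12 s^T M s + d_\psi(Pa\,\|\,P\hat a)$, use skew-symmetry and the adaptation law to cancel the parameter-error term, bound the residual $s^T(\hat Y - Y)a$ by $\|s\|\delta\|a\|$, conclude ultimate boundedness of $s$ in the set $\{\|s\|\le \delta\|a\|/\lambda_{\min}(K)\}$, and convert this to a bound on $\tilde q$ via the convolution estimate with $\gamma=\int_0^\infty\|e^{-\tau\Lambda}\|\,d\tau$, with the $\delta=0$ case handled identically. The technical caveats you raise (unboundedness of $\nabla^2\psi$ for $1<p<2$ and the rigor of the ultimate-boundedness step) are real but are glossed over in the paper's own argument as well, so your write-up is, if anything, slightly more careful.
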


\begin{remark}
    More advanced control designs like composite adaptive controller \citep{slotine1989composite} and $\sigma$-modification \citep{narendra1987new} can provide asymptotic tracking guarantees. In the literature, such robust adaptive control methods are currently only derived for adaptive controllers with classical adaptation laws. For now, we leave deriving robust MD-based adaptive control for future investigation.
\end{remark}

\section{Numerical Simulations}\label{sec:simulation}
We conduct numerical simulations of our proposed method on a fully actuated planar quadrotor model \citep{underactuated}, which follows the manipulator equation \eqref{equ:man_eq} with the specifications below:
\begin{align*}
    M(q) &= I_{3\times 3},\quad C(q, \dot{q}) = 0_{3\times 3},\quad g(q) = [0, -g, 0]^T,\\
    q &= \begin{bmatrix}
        x \\ y \\ \phi
    \end{bmatrix}, \quad 
    \tau_{q, \dot{q}}(u) = R(\phi) u = \begin{bmatrix}
        \cos\phi & -\sin\phi & 0\\
        \sin\phi & \cos\phi & 0\\
        0 & 0 & 1
    \end{bmatrix}u,
\end{align*}
where $g = -9.81$ [$m/s^2$] is the gravitational acceleration, and the system states include the planar quadrotor's center of mass position $(x, y)$ and its roll angle $\phi$ along with their time derivatives $(\dot{x}, \dot{y}, \dot{\phi})$. Note that the system is fully actuated as given any $q$, the map $\tau_{q, \dot{q}}(u) = R(\phi) u$ is invertible since $R(\phi)$ is invertible.

The unknown disturbance $f_{\text{ext}}(q, \dot{q}; w)$ is modeled as the drag force caused by a wind blowing along the inertial $x$-axis at a speed $w \in \R$, which is defined by the following equations:
\begin{align*}
    v_1 &= (\dot{x} - w)\cos\phi + \dot{y}\sin\phi,
    \quad\quad v_2 = -(\dot{x} - w) \sin\phi + \dot{y}\cos\phi,\\
    f_{\text{ext}}(q, \dot{q}; w) &= -\begin{bmatrix}
        \cos\phi & -\sin\phi \\
        \sin\phi & \cos\phi \\
        0 & 0
    \end{bmatrix} \begin{bmatrix}
        \beta_1 v_1 |v_1| \\
        \beta_2 v_2 |v_2|
    \end{bmatrix},
\end{align*}
where $\beta_1, \beta_2 > 0$ are mass-normalized drag coefficients, chosen as $0.1, 1.0$, respectively.

Given this problem setting, we then sample the trajectory sets $\{D_j\}_{j=1}^M$ and $\{\mathcal{T}_j\}_{j=1}^{M}$, which are used for meta-training and ensemble training in Sections~\ref{sec:meta-training} and \ref{sec:ensemble}, respectively. The detailed procedure for this step can be found in Appendix~\ref{sec:data-collection}.
We note that for our experiments, our training data are generated from wind speed up to 6 $[m/s]$, but we consider higher wind speed at test time.

\subsection{Benchmark with Baseline Method}
Our previous work \citep{richards2023control} proposed a meta-learning framework where the base-learner is an adaptive controller with GD-based adaptation laws, and the meta-learner includes the control gains $(\Gamma, K, \Lambda)$ and feature function neural network parameters $\theta_Y$. In comparison, our proposed method in this paper uses an adaptive controller with MD-based adaptation laws as the base-learner and leverages the meta-learner to search for an optimal $p$ which determines the potential function from the family $\psi(a) = \|a\|_p^p$.

Extensive numerical experiments in \citep{richards2023control} have already shown improvement in tracking control performance over conventional control methods like PID and regression-oriented meta-learning methods like the one proposed in \citep{o2022neural}. Considering that the GD-based adaptation law is a special case of its MD-based counterpart when $p=2$, our method is expected to show comparable or better performance over the one in \citep{richards2023control}. For these reasons, we will use the GD-based meta-adaptive controller in \citep{richards2023control} as our baseline for comparison to validate our hypothesis.

\subsection{Simulation Results}
During meta-training, we initialize the parameters $\theta_Y, P, \Lambda, K$ randomly, and initialize $p$ as $2.0$. Under the chosen Bregman divergence family $\psi(a) = \|a\|_p^p$, the MD-based adaptation law is the same as its GD-based counterpart when $p=2$, so that we can recover the baseline controller by fixing $p = 2.0$. To compare the performance of the GD-based meta-adaptive controller and our proposed method, we complete the meta-training on the same training dataset with (1) fixed $p=2.0$ (2) searching for an optimal $p$ using the meta-learner. The optimal $p$ was found to be $p=2.2$.

Using a double-loop trajectory in $x-y$ plane as a reference, we test the learned controller with $p=2.2$ and the baseline controller (with fixed $p=2.0$) under disturbances generated by wind blowing from the inertial $x$-axis at different speeds $w$ between 2.0 [$m/s$] and 10.0 [$m/s$]. We generate a trajectory of $T = 10.0$ [s] using both controllers and sample these trajectories at a fixed time-step $\Delta t = 0.02$ [s]. In Table~\ref{tab:RMS}, we present the root-mean-squared (RMS) reference trajectory tracking error $\frac{1}{N}\sum_{k=1}^N \|q(k\Delta t) - q_r(k\Delta t)\|_2^2$ for both controllers under different wind speeds. 

It is observed that our proposed method shows significant improvement over the baseline controller (roughly over 50\% reduction in tracking error) at every wind speed. Note that the training dataset contains trajectories with disturbance sampled from a distribution of $w$ supported over $[0, 6]$ [$m/s$]. It is also observed that the margin by which our proposed controller (using optimized $p=2.2$) outperforms the baseline grows as the wind speed becomes larger (i.e., further from the training distribution).

\begin{table}[!ht]
    \centering
    \begin{tabular}{cccc}
        \toprule
        $w$ & RMS: fixed $p=2.0$ & RMS: learned $p=2.2$ & In Distribution?\\
        \midrule
        \midrule
        2.0 & 0.0171 & 0.0095 & Yes\\
        \hline
        4.0 & 0.0443 & 0.0172 & Yes\\
        \hline
        6.0 & 0.0910 & 0.0303 & No\\
        \hline
        8.0 & 0.1738 & 0.0483 & No\\
        \hline
        10.0 & 0.2746 & 0.0714 & No\\
        \toprule
    \end{tabular}
    \caption{Tracking RMS under different wind speeds. Compared to the GD-based method~\citep{richards2023control}, our proposed MD-based approach not only achieves lower tracking error at every wind speed but also generalizes significantly better for wind speeds outside of our training data.}
    \label{tab:RMS}
\end{table}

\begin{figure}[!ht]
    \centering
    \newlength{\imagewidth}
    \settowidth{\imagewidth}{\includegraphics{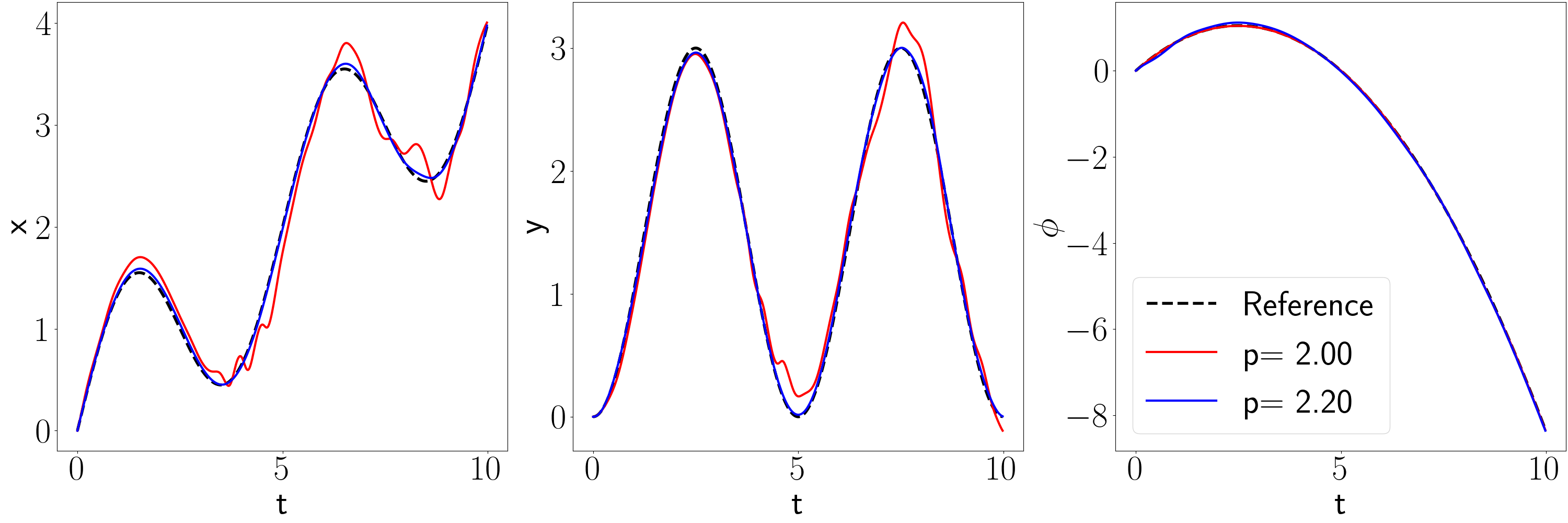}}
    \subfigure[Roll angle history]{\includegraphics[width=0.25\textwidth, trim=0.66\imagewidth{} 0 0 0, clip]{plots/w_8.0_q_i.png}}
    \subfigure[$x-y$ Phase plot]{\includegraphics[width=0.74\textwidth]{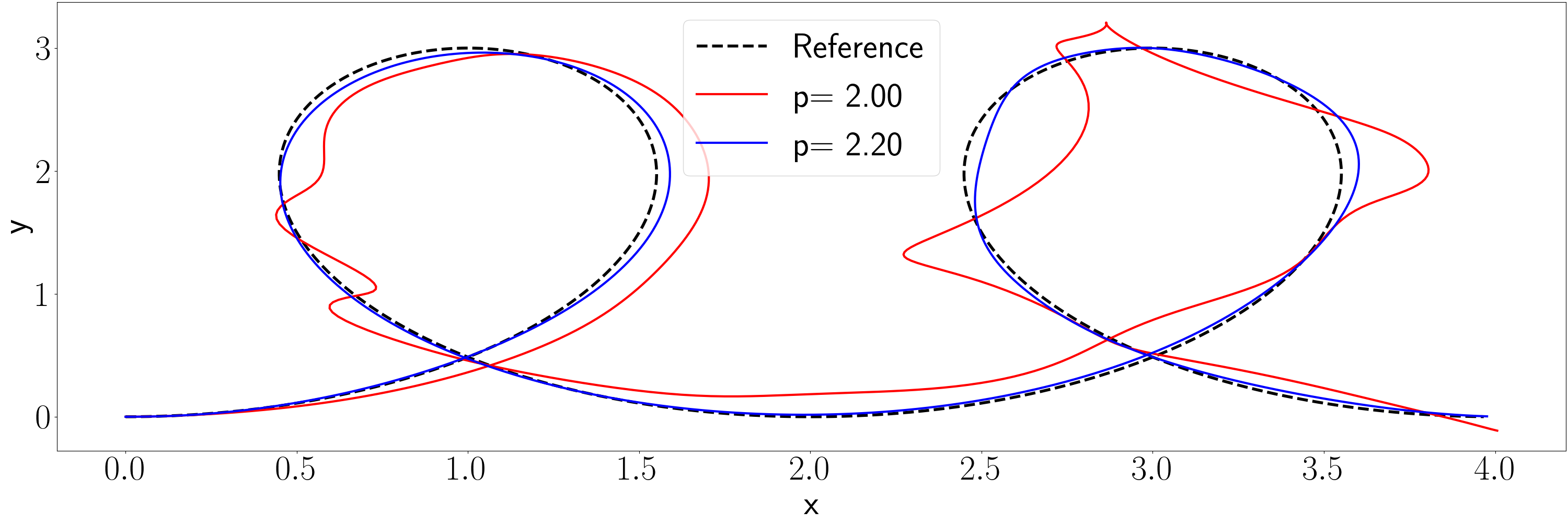}}
    \caption{Simulation result under disturbance $w=8.0$ [$m/s$]. These figures demonstrate that our MD-based controller (\textcolor{blue}{blue}) improves the tracking accuracy significantly over the GD-based baseline controller.}
    \label{fig:w_8}
\end{figure}

In Figure~\ref{fig:w_8}, we provide a comparison of our method against the baseline with an out-of-distribution disturbance $w = 8.0$ [$m/s$].
Figure~\ref{fig:w_8}(a) illustrates the time history of the roll angle, and Figure~\ref{fig:w_8}(b) shows the $x-y$ phase plot. As we can see, the learned MD-based controller with $p=2.2$ tracks the reference well while the baseline controller fails to track the reference after the first loop.
We also include plots for additional wind speeds in Appendix~\ref{sec:additional-fig}.

\section{Conclusions}\label{sec:conclusion}
In this paper, we proposed a method that leverages meta-learning to automate the choice of the MD-based adaptive control. Compared to other methods that combine meta-learning and adaptive control, our method utilizes meta-learning to search for a potential function used in the MD-based adaptation law while learning a neural network representation of unknown disturbance, allowing us to take advantage of the (potentially non-Euclidean) geometry of the parameter space. Numerical simulations conducted on a planar quadrotor demonstrate improved control performance of our method compared to previous 
methods based on classical adaptation laws.

\clearpage

\acks{
The authors acknowledge the MIT SuperCloud and Lincoln Laboratory Supercomputing Center for providing computing resources that have contributed to the results reported within this paper. This work was supported in part by MathWorks, the MIT-IBM Watson AI Lab, the MIT-Amazon Science Hub, and the MIT-Google Program for Computing Innovation.
}

\bibliography{ref}

@article{richards2023control,
  title={Control-oriented meta-learning},
  author={Richards, Spencer M and Azizan, Navid and Slotine, Jean-Jacques and Pavone, Marco},
  journal={The International Journal of Robotics Research},
  volume={42},
  number={10},
  pages={777--797},
  year={2023},
  publisher={SAGE Publications Sage UK: London, England}
}

@book{slotine1991applied,
  title={Applied nonlinear control},
  author={Slotine, Jean-Jacques E and Li, Weiping},
  volume={199},
  number={1},
  year={1991},
  publisher={Prentice hall Englewood Cliffs, NJ}
}

@book{underactuated,
  title        = "Underactuated Robotics",
  subtitle     = "Algorithms for Walking, Running, Swimming, Flying, and Manipulation",
  howpublished = "Course Notes for MIT 6.832",
  author       = "Tedrake, Russ",
  year         = 2023,
  url          = "https://underactuated.csail.mit.edu",
}

@article{manchester2017control,
  title={Control contraction metrics: Convex and intrinsic criteria for nonlinear feedback design},
  author={Manchester, Ian R and Slotine, Jean-Jacques E},
  journal={IEEE Transactions on Automatic Control},
  volume={62},
  number={6},
  year={2017},
  publisher={IEEE}
}

@article{kouvaritakis2016model,
  title={Model predictive control},
  author={Kouvaritakis, Basil and Cannon, Mark},
  journal={Switzerland: Springer International Publishing},
  volume={38},
  pages={13--56},
  year={2016},
  publisher={Springer}
}

@article{lyapunov1992general,
  title={The general problem of the stability of motion},
  author={Lyapunov, Aleksandr Mikhailovich},
  journal={International journal of control},
  volume={55},
  number={3},
  pages={531--534},
  year={1992},
  publisher={Taylor \& Francis}
}

@article{lasalle1960some,
  title={Some extensions of Liapunov's second method},
  author={LaSalle, Joseph},
  journal={IRE Transactions on circuit theory},
  volume={7},
  number={4},
  pages={520--527},
  year={1960},
  publisher={IEEE}
}

@article{lohmiller1998contraction,
  title={On contraction analysis for non-linear systems},
  author={Lohmiller, Winfried and Slotine, Jean-Jacques E},
  journal={Automatica},
  volume={34},
  number={6},
  pages={683--696},
  year={1998},
  publisher={Elsevier}
}

@article{boffi2021implicit,
  title={Implicit regularization and momentum algorithms in nonlinearly parameterized adaptive control and prediction},
  author={Boffi, Nicholas M and Slotine, Jean-Jacques E},
  journal={Neural Computation},
  volume={33},
  number={3},
  pages={590--673},
  year={2021},
  publisher={MIT Press One Rogers Street, Cambridge, MA 02142-1209, USA journals-info~…}
}

@inproceedings{lee2018natural,
  title={A natural adaptive control law for robot manipulators},
  author={Lee, Taeyoon and Kwon, Jaewoon and Park, Frank C},
  booktitle={2018 IEEE/RSJ International Conference on Intelligent Robots and Systems (IROS)},
  pages={1--9},
  year={2018},
  organization={IEEE}
}

@article{bregman1967relaxation,
  title={The relaxation method of finding the common point of convex sets and its application to the solution of problems in convex programming},
  author={Bregman, Lev M},
  journal={USSR computational mathematics and mathematical physics},
  volume={7},
  number={3},
  pages={200--217},
  year={1967},
  publisher={Elsevier}
}

@article{slotine1987adaptive,
  title={On the adaptive control of robot manipulators},
  author={Slotine, Jean-Jacques E and Li, Weiping},
  journal={The international journal of robotics research},
  year={1987},
  publisher={Sage Publications Sage CA: Thousand Oaks, CA}
}

@article{narendra1987new,
  title={A new adaptive law for robust adaptation without persistent excitation},
  author={Narendra, Kumpatis and Annaswamy, Anuradham},
  journal={IEEE Transactions on Automatic control},
  volume={32},
  number={2},
  pages={134--145},
  year={1987},
  publisher={IEEE}
}

@article{slotine1989composite,
  title={Composite adaptive control of robot manipulators},
  author={Slotine, Jean-Jacques E and Li, Weiping},
  journal={Automatica},
  volume={25},
  number={4},
  pages={509--519},
  year={1989},
  publisher={Elsevier}
}

@article{hospedales2021meta,
  title={Meta-learning in neural networks: A survey},
  author={Hospedales, Timothy and Antoniou, Antreas and Micaelli, Paul and Storkey, Amos},
  journal={IEEE transactions on pattern analysis and machine intelligence},
  volume={44},
  number={9},
  pages={5149--5169},
  year={2021},
  publisher={IEEE}
}

@inproceedings{finn2019online,
  title={Online meta-learning},
  author={Finn, Chelsea and Rajeswaran, Aravind and Kakade, Sham and Levine, Sergey},
  booktitle={International conference on machine learning},
  year={2019}
}

@inproceedings{franceschi2018bilevel,
  title={Bilevel programming for hyperparameter optimization and meta-learning},
  author={Franceschi, Luca and Frasconi, Paolo and Salzo, Saverio and Grazzi, Riccardo and Pontil, Massimiliano},
  booktitle={International conference on machine learning},
  pages={1568--1577},
  year={2018},
}

@incollection{thrun1998learning,
  title={Learning to learn: Introduction and overview},
  author={Thrun, Sebastian and Pratt, Lorien},
  booktitle={Learning to learn},
  pages={3--17},
  year={1998},
  publisher={Springer}
}

@article{o2022neural,
  title={Neural-fly enables rapid learning for agile flight in strong winds},
  author={O’Connell, Michael and Shi, Guanya and Shi, Xichen and Azizzadenesheli, Kamyar and Anandkumar, Anima and Yue, Yisong and Chung, Soon-Jo},
  journal={Science Robotics},
  volume={7},
  number={66},
  year={2022},
  publisher={American Association for the Advancement of Science}
}

@inproceedings{azizan2019stochastic,
  title={Stochastic gradient/mirror descent: Minimax optimality and implicit regularization},
  author={Azizan, Navid and Hassibi, Babak},
  booktitle={International Conference on Learning Representations},
  year={2019}
}

@article{azizan2021stochastic,
  title={Stochastic mirror descent on overparameterized nonlinear models},
  author={Azizan, Navid and Lale, Sahin and Hassibi, Babak},
  journal={IEEE Transactions on Neural Networks and Learning Systems},
  year={2021},
  publisher={IEEE}
}

@article{sun2023unified,
  title={A Unified Approach to Controlling Implicit Regularization via Mirror Descent},
  author={Sun, Haoyuan and Gatmiry, Khashayar and Ahn, Kwangjun and Azizan, Navid},
  journal={Journal of Machine Learning Research},
  volume={24},
  number={393},
  pages={1--58},
  year={2023}
}

@inproceedings{zhuang2020adaptive,
  title={Adaptive checkpoint adjoint method for gradient estimation in neural ode},
  author={Zhuang, Juntang and Dvornek, Nicha and Li, Xiaoxiao and Tatikonda, Sekhar and Papademetris, Xenophon and Duncan, James},
  booktitle={International Conference on Machine Learning},
  pages={11639--11649},
  year={2020},
  organization={PMLR}
}

@inproceedings{clavera2018model,
  title={Model-based reinforcement learning via meta-policy optimization},
  author={Clavera, Ignasi and Rothfuss, Jonas and Schulman, John and Fujita, Yasuhiro and Asfour, Tamim and Abbeel, Pieter},
  booktitle={Conference on Robot Learning},
  pages={617--629},
  year={2018},
}

@InCollection{Sontag2021,
  author    = {Eduardo D. Sontag},
  title     = {Input-to-State Stability},
  booktitle = {Encyclopedia of Systems and Control},
  publisher = {Springer},
  year      = {2021},
  editor    = {J. Baillieul and T. Samad},
  address   = {Cham},
  doi       = {10.1007/978-3-030-44184-5_78},
}

@Book{Narendra05,
   author    = {Kumpati S. Narendra and Anuradha M. Annaswamy},
   publisher = {Dover Publications},
   title     = {Stable Adaptive Systems},
   year      = {2005},
   address   = {NJ},
}

@article{rajeswaran2016epopt,
  title={Epopt: Learning robust neural network policies using model ensembles},
  author={Rajeswaran, Aravind and Ghotra, Sarvjeet and Ravindran, Balaraman and Levine, Sergey},
  journal={arXiv preprint arXiv:1610.01283},
  year={2016}
}

@article{kumar2021rma,
  title={Rma: Rapid motor adaptation for legged robots},
  author={Kumar, Ashish and Fu, Zipeng and Pathak, Deepak and Malik, Jitendra},
  journal={arXiv preprint arXiv:2107.04034},
  year={2021}
}

@article{almecija2022uncertainty,
  title={Uncertainty-Aware Meta-Learning for Multimodal Task Distributions},
  author={Almecija, Cesar and Sharma, Apoorva and Azizan, Navid},
  journal={arXiv preprint arXiv:2210.01881},
  year={2022}
}

@article{tang2024meta,
  title={Meta-Learning for Adaptive Control with Automated Mirror Descent},
  author={Tang, Sunbochen and Sun, Haoyuan and Azizan, Navid},
  journal={arXiv preprint arXiv:2407.20165},
  year={2024}
}

@inproceedings{musavi2023convergence,
  title={Convergence of Gradient-based MAML in LQR},
  author={Musavi, Negin and Dullerud, Geir E},
  booktitle={2023 62nd IEEE Conference on Decision and Control (CDC)},
  pages={7362--7366},
  year={2023},
  organization={IEEE}
}

@book{hovakimyan2010L1,
  title={L1 adaptive control theory: Guaranteed robustness with fast adaptation},
  author={Hovakimyan, Naira and Cao, Chengyu},
  year={2010},
  publisher={SIAM}
}

\clearpage

\appendix


\section{Background on Adaptive Control with Gradient-based Adaptation}\label{append:gd_adaptive}
This section complements Section~\ref{sec:adap_ctrl_MD} and covers the topic of adaptive control with gradient descent-based adaptation in greater detail.
Recall that we consider a control-affine system according to \eqref{equ:affine_system}:
\begin{equation*}
    \dot{x} = f(x) + g(x)(u+f_{\text{ext}}(x)).
\end{equation*}

Our objective is to track a desired reference trajectory $x_r(t)$. Even without any disturbance $f_{\text{ext}} = 0$, tracking an arbitrary reference is not possible unless the system is fully actuated, in other words, $g(x)$ is invertible \citep{underactuated}. Therefore, we consider tracking a reference trajectory that is open-loop feasible, meaning that there exists $(x_r, u_r)$ satisfying the following nominal dynamics in the absence of disturbance,
\vspace{-0.5em}
\begin{equation}\label{equ:affine_system_nodis}
    \dot{x} = f(x) + g(x)u.
\end{equation}
Assuming the disturbance $f_{\text{ext}}$ is known, any controller $\overline{u}$ that achieves tracking for the nominal dynamics in \eqref{equ:affine_system_nodis} can lead to a controller $u = \overline{u}+f_{\text{ext}}(x)$ that ensures $x_r(t) - x(t) \to 0$ by directly canceling the disturbance $f_{\text{ext}}$.

Given the complex nature of nonlinear dynamics, it is difficult to give a closed-form expression for $\overline{u}$, a tracking controller for the nominal dynamics, since their designs are typically based on the specific dynamical system structure and computational resource constraints. Approaches to designing $\overline{u}$ include a variety of nonlinear control methods, e.g., control-contraction-metrics-based feedback \citep{manchester2017control}, sliding mode control \citep{slotine1991applied}, and model predictive control \citep{kouvaritakis2016model}. Despite the variety of control laws developed, they are all associated with a ``Lyapunov-like'' scalar function $\overline{V}(x, x_r)$ that guarantees asymptotic stability, or in this case tracking objective $x_r - x \to 0$. Furthermore, different control designs may use different notions of stability, e.g., Lyapunov's direct method \citep{lyapunov1992general}, LaSalle's invariance principle \citep{lasalle1960some}, or incremental stability \citep{lohmiller1998contraction}. Therefore, it is difficult to establish a uniform stability condition on $\overline{V}$ that establishes stability guarantees. In Section~\ref{sec:method}, we give a concrete example of fully actuated dynamical systems.

When the disturbance $f_{\text{ext}}$ is unknown, we make an additional assumption that the disturbance $f_{\text{ext}}(x)$ can be linearly parameterized. Specifically, there are known nonlinear features $Y: \R^n \to \R^{m\times d}$ and an unknown but fixed vector $a \in \R^d$ such that $f_{\text{ext}}(x) = Y(x) a$. 
Recall that in \eqref{equ:adap_ctrl_GD}, we define  the conventional adaptive control design
    \begin{align*}
        u &= \overline{u} - Y(x) \hat{a}, \\
        \dot{\hat{a}} &= \Gamma^{-1} Y(x)^T g(x)^T \nabla_x \overline{V}(x, x_r).
    \end{align*}
Given a tracking controller for the nominal dynamics $\overline{u}$ and its stability certificate $\overline{V}(x, x_r)$, under the linear parameterization and matched uncertainty settings, this controller can be shown to ensure asymptotic tracking  under disturbance in the following lemma.

\begin{lemma}
\label{thm:lyapunov-gd}
Under the adaptive controller defined in \eqref{equ:adap_ctrl_GD}, the Lyapunov-like function
\begin{equation*}
    V(x, x_r, \hat{a}, a) = \overline{V}(x, x_r) + \frac{1}{2}(\hat{a}-a)\Gamma(\hat{a}-a)
\end{equation*}
serves as a stability certificate that ensures $x(t) \to x_r(t)$ as $t\to \infty$.
\end{lemma}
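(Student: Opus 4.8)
The plan is to verify that $V(x, x_r, \hat{a}, a) = \overline{V}(x, x_r) + \frac{1}{2}(\hat{a}-a)^T\Gamma(\hat{a}-a)$ is a valid stability certificate by computing $\dot{V}$ along the closed-loop trajectories and showing it is nonpositive, then invoking a standard Lyapunov/LaSalle argument for the tracking conclusion. First I would differentiate $V$ along the closed loop: $\dot{V} = \nabla_x \overline{V}(x, x_r)^T \dot{x} + \frac{\partial \overline{V}}{\partial x_r}^T \dot{x}_r + (\hat{a}-a)^T\Gamma \dot{\hat{a}}$, where I use that $a$ is constant. Substituting the true dynamics $\dot{x} = f(x) + g(x)(u + Y(x)a)$ together with the control law $u = \overline{u} - Y(x)\hat{a}$ from \eqref{equ:adap_ctrl_law_GD} gives $\dot{x} = f(x) + g(x)\overline{u} + g(x)Y(x)(a - \hat{a})$, i.e. the nominal closed-loop vector field plus the perturbation $g(x)Y(x)(a-\hat{a})$.

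The key cancellation is then: the terms in $\dot{V}$ that correspond to the nominal closed-loop dynamics (substituting $\dot{x}|_{\text{nominal}}$ into $\nabla_x\overline{V}^T\dot{x} + \tfrac{\partial\overline{V}}{\partial x_r}^T\dot{x}_r$) reproduce $\dot{\overline{V}}$ evaluated along the nominal system, which by hypothesis on $\overline{u}$ and $\overline{V}$ is negative semidefinite in the tracking error. The remaining cross term is $\nabla_x\overline{V}^T g(x)Y(x)(a-\hat{a})$. Plugging in the adaptation law \eqref{equ:adaptation_GD}, $\dot{\hat{a}} = \Gamma^{-1}Y(x)^T g(x)^T \nabla_x\overline{V}$, the parameter-error contribution becomes $(\hat{a}-a)^T\Gamma\Gamma^{-1}Y(x)^T g(x)^T\nabla_x\overline{V} = (\hat{a}-a)^T Y(x)^T g(x)^T \nabla_x\overline{V} = -\,\nabla_x\overline{V}^T g(x)Y(x)(a-\hat{a})$, which is exactly the negative of the cross term. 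The two cancel, leaving $\dot{V} \le \dot{\overline{V}}_{\text{nominal}} \le 0$.

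Having established $\dot V \le 0$ with $V$ bounded below (and radially unbounded in the error variables, given the properties of $\overline V$ and positive definiteness of $\Gamma$), I would conclude boundedness of $(x - x_r, \hat a - a)$, then apply Barbalat's lemma or LaSalle's invariance principle: $\dot V$ being uniformly continuous (via boundedness of the states and hence of $\ddot V$) and integrable forces $\dot V \to 0$, which, using the definiteness of $\dot{\overline V}_{\text{nominal}}$ in the tracking error, yields $x(t) - x_r(t) \to 0$. The main obstacle is not any single calculation but the fact that the excerpt only posits $\overline{V}$ abstractly as a ``Lyapunov-like'' certificate for whatever nominal controller $\overline u$ is used, across possibly different stability notions; so the proof must either fix a concrete structural assumption (e.g. $\dot{\overline V}_{\text{nominal}} \le -\alpha(\|x - x_r\|)$ for some class-$\mathcal K$ function, as is implicit in the manipulator example of Section~\ref{sec:method}) or state the cancellation at the level of $\dot V - \dot{\overline V}_{\text{nominal}} = 0$ and then defer the final tracking conclusion to the properties already assumed of $\overline V$. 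I would take the latter route, making the cross-term cancellation the technical heart and treating the passage from $\dot V \le 0$ to asymptotic tracking as inherited from the nominal design.
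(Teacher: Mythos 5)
Your proposal is correct and follows essentially the same route as the paper's proof: differentiate $V$, observe that the adaptation-law term $(\hat a - a)^T\Gamma\dot{\hat a}$ exactly cancels the disturbance cross term $\nabla_x\overline V^T g(x)Y(x)(a-\hat a)$, so that $\dot V$ reduces to the nominal $\dot{\overline V}$, and then invoke Barbalat's lemma. If anything, you spell out the key cancellation more explicitly than the paper does, and your closing remark about the abstractness of the hypothesis on $\overline V$ matches a caveat the paper itself raises just before stating the lemma.
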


\begin{proof}
Note that, since $\Gamma$ is a positive matrix, the Lyapunov function
\begin{equation*}
    V(x, x_r, \hat{a}, a) = \overline{V}(x, x_r) + \frac{1}{2}(\hat{a}-a)\Gamma(\hat{a}-a),
\end{equation*}
is positive definite whenever $\bar{V}$ is positive definite.
Applying the adaptive controller in \eqref{equ:adap_ctrl_GD} to the dynamics in \eqref{equ:affine_system}, we get
\begin{align*}
    \dot{V} \! = \! \nabla_x \overline{V}(x, x_r)^T \! (f(x) + g(x) \overline{u}) \! + \! \nabla_{x_r} \overline{V}(x, x_r)^T \! (f(x_r) \! + \! g(x_r)u_r)
\end{align*}
Note that the above time derivative is the same as $\dot{\overline{V}}$ when $\overline{u}$ is applied to the nominal dynamics $\dot{x} = f(x) + g(x)u$. Using Barbalat's lemma, $V(x, x_r, \hat{a}, a)$ can serve as a certificate that ensures $x \to x_r$.

\end{proof}

\section{Additional Details on Ensemble Training}
\label{sec:ensemble-training}
As previously discussed in Section~\ref{sec:ensemble}, we cannot assume direct access to the disturbance model $f_{\text{ext}}(q, \dot{q}; w)$ in our offline meta-training and instead we have to utilize an ensemble of surrogate models in place of $f_{\text{ext}}(q, \dot{q}; w)$.
To this end, in this section, we discuss how we may train such surrogate models.

Recall that model ensemble training and meta-training share the same set of sampled disturbance parameters $\{w_j\}_{j=1}^M$. With a given controller, we collect training data over $M$ trajectories of length $T_e$, each generated by the closed-loop system under disturbance $f_{\text{ext}}(q, \dot{q}; w_j)$ for $j = 1, \dots, N$. Note also that the controller for trajectory generation does not need to be performant or achieve any specific goal, for example, a general PID controller would be sufficient.  From each simulated trajectory, we can acquire the state and control input history $(q_j(t), \dot{q}_j(t)), u_j(t), t \in [0, T_e]$, respectively. By sampling from these $M$ trajectories at a fixed time step, $\Delta t = T/N_e$, the trajectory data from a single trajectory under disturbance $w_j$ is collected as $\mathcal{T}_j = \{q_j^{(k)}, \dot{q}_j^{(k)}, u_j^{(k)}\}_{k=1}^{N_e}$, where $(\cdot)_j^{(k)} = (\cdot)(k\Delta t)$ represents a quantity sampled at the $k$-th time step.

By constructing a neural network $\hat{f}(q, \dot{q}; \xi)$ as a surrogate model for the unknown disturbance model $f_{\text{ext}}(q, \dot{q}; w_j)$, we fit a neural network model $\hat{f}(q, \dot{q}; \xi_j)$ for each trajectory data $\mathcal{T}_j$. Hence we obtain an ensemble of models $\hat{f}_j = \hat{f}(q, \dot{q}; \xi_j)$ that roughly approximate the distribution of $f(q, \dot{q}; w)$ over $w$. Each model $\hat{f}(q, \dot{q}; \xi_j)$ is fit separately to its corresponding dataset $\mathcal{T}_j$ by performing stochastic gradient descent on the following optimization problem:
\begin{align*}
    &\min_{\xi_j} \frac{1}{N_e} \left( \sum_{k=1}^{N_e-1} \left\|\begin{bmatrix}
        \hat{q}_j^{(k)}(\Delta t) - q_j^{(k+1)}\\
        \dot{\hat{q}}_j^{(k)}(\Delta t) - \dot{q}_j^{(k+1)}
    \end{bmatrix}\right\|_2^2  \right)\\
    \text{s.t. } \quad
        &\ddot{\hat{q}}_j^{(k)}(t) = M^{-1}[\tau(u_j^{(k)}(t)) + \hat{f}(\hat{q}_j^{(k)}(t), \dot{\hat{q}}_j^{(k)}(t); \xi_j) - C\dot{\hat{q}}^{(k)}_j(t), \quad t \in [0, \Delta t],\\
        &\dot{\hat{q}}_j^{(k)}(0) = \dot{q}_j^{(k)}, \quad \hat{q}_j^{(k)}(0) = q_j^{(k)}, \quad k = 1,2, ..., N_e-1.
\end{align*}
In the optimization problem above, the model $\hat{f}(q, \dot{q}, \xi_j)$ is fitted to $\mathcal{T}_j$ through a one-step prediction setting: $(\hat{q}_j^{(k)}(t), \dot{\hat{q}}_j^{(k)}(t), \ddot{\hat{q}}_j^{(k)}(t)), t \in [0, \Delta t]$ is the numerically integrated trajectory initialized from true trajectory samples at $k\Delta t$, $(q_j^{(k)}, \dot{q}_j^{(k)})$.

\section{Training Data Collection}
\label{sec:data-collection}

As discussed in Section~\ref{sec:method}, training data needs to be collected for both ensemble training and meta-training. For ensemble training, we generate the trajectory set $\{\mathcal{T}_j\}_{j=1}^{M}$ by the following steps:
\begin{enumerate}
    \item Sample $w_j$ from a distribution $w_j \sim W_{\text{train}}$.
    \item Generate a random walk reference waypoints $q_r(k\Delta t)$, and fit a smooth spline between each waypoint, resulting in a reference trajectory $q_r(t), t\in [0, T]$.
    \item Apply a PID controller to the system for the purpose of tracking reference $q_r(t)$, obtain the simulated trajectory from the closed-loop system $q(t), \dot{q}(t), t\in [0, T]$.
    \item Acquire $\mathcal{T}_j$ by sampling from the simulated trajectory  as described in Section~\ref{sec:ensemble}.
\end{enumerate}

Recall from Section~\ref{sec:meta-training} that the meta-training dataset $\{D_j\}_{j=1}^M$ consists of $M$ different pairs of disturbance parameter $w_j$ and reference trajectories, i.e., $D_j = \{w_j, \{q_r^{(ij)}(t), t \in [0, T]\}_{i=1}^N\}$. Similar to the procedure introduced above, we follow steps 1 and 2 to acquire each pair $D_j$ for meta-training, except that in step 2, we need to generate $N$ different reference trajectories to obtain $\{q_r^{(ij)}\}_{i=1}^N$ for every $w_j$.

The training distribution $W_{train}$ for both ensemble training and meta-training is chosen as a scaled beta distribution. More precisely, a random variable following $W_{train}$, denoted as $w$, is defined as $w = w_{max} \xi$, where $\xi\sim Beta(\alpha, \beta)$ follows a Beta distribution. We choose $w_{max} = 6$, $\alpha = 5, \beta = 9$, resulting in a distribution finitely supported over $[0, 6]$, which is consistent with our simulation setup where the wind speed $w_j$ is often bounded.

\section{Proof of Theorem~\ref{thm:main}}
\label{sec:proof-main}

    By definition of the Bregman divergence,
    \begin{align*}
        \frac{d}{dt} d_{\psi} (P a || P\hat{a}) &= \frac{d}{dt}\left( \psi(Pa) - \psi(P\hat{a}) -\nabla\psi(P\hat{a})^T(Pa - P\hat{a}) \right)\\
        &= -\nabla \psi(P\hat{a})^T P \dot{\hat{a}} + (\nabla^2 \psi(P\hat{a})P\dot{\hat{a}})^T P \tilde{a} + \nabla \psi(P\hat{a})^T P\dot{\hat{a}}\\
        &= (\nabla^2 \psi(P\hat{a})P\dot{\hat{a}})^T P \tilde{a}
    \end{align*}
    
    By applying the adaptive controller in \eqref{equ:man_adap_ctrl} to the system in \eqref{equ:man_eq}, recall that $s = \dot{\tilde{q}} + \Lambda \tilde{q}$, $q_v = q_r - \Lambda \int_0^t \tilde{q}(\tau) d\tau$, the resulting closed-loop dynamics can be written as
    \begin{align*}
        M\ddot{q} + C\dot{q} + g - f_{\text{ext}}(q, \dot{q}) &= M \ddot{q}_v + C \dot{q}_v + g - K s - \hat{Y}(q, \dot{q})\hat{a}\\
        \implies M \dot{s} + (C+K)s &= Y(q, \dot{q})a - \hat{Y}(q, \dot{q})\hat{a}
    \end{align*}
    where $\tilde{a} = \hat{a} - a$ is the parameter estimation error. Using the skew-symmetric $\dot{M} - 2C$ assumption, we have
    \begin{align*}
        \dot{V} &= \frac{1}{2}s^T \dot{M} s + s^T (-(C+K)s + Ya - \hat{Y}\hat{a}) + \frac{d}{dt} d_{\psi} (P a || P\hat{a})\\
        &= -s^T K s + s^T Y \tilde{a} + s^T(\hat{Y} - Y)a + \frac{d}{dt} d_{\psi} (P a || P\hat{a}).
    \end{align*}
    Using the adaptation law in \eqref{equ:man_adap_law},
    \begin{align*}
        \dot{V} = -s^T K s - s^T Y\tilde{a} + s^T(\hat{Y} - Y)a + s^T Y P^{-1}P\tilde{a}= -s^T K s + s^T (\hat Y - Y)a
    \end{align*}
    Let $\lambda_{\text{min}}(K)$ be the minimum eigenvalue of $K$, define the set $E = \{s: \|s\|\leq \frac{\delta \|a\|}{\lambda_{\text{min}}(K)}$\}, then
    \begin{align*}
        \dot{V} &\leq -\lambda_{\text{min}}(K) \|s\|^2 + \|s\|\delta \|a\| < 0, \forall s \notin E.
    \end{align*}
    Therefore, $s(t)$ will enter the set $E$ in finite time and stay there afterwards. As a result, $s:\R \to \R^n$ as a function of time satisfies $s \in \mathcal{L}_{\infty}$ and $\|s\|_{\infty} = \sup_{t} \|s(t)\| \leq \delta \|a\|/\lambda_{\text{min}}(K)$.

    Since $\Lambda \succ 0$, $\dot{\tilde{q}}(t) = -\Lambda \tilde{q}(t) + s(t)$ can be seen as an internally stable linear system with an input $s(t)$. Using the results from input-to-state stability analysis \citep{Sontag2021}, we have:
    \begin{align*}
        \|\tilde{q}(t)\| \leq \|\exp(-\Lambda t)\|\|\tilde{q}(0)\| + \gamma \|s\|_{\infty}, 
        \gamma = \int_0^{\infty} \|\exp(-\tau\Lambda) \|d\tau < \infty.
    \end{align*}
    Therefore, $\tilde{q}$ converges to the ball $B(0, \frac{\gamma \delta \|a\|}{\lambda_{\text{min}}(K)})$ in finite time and stays in it afterwards.

    When $\delta = 0$, $E = \{0\}$ hence $s(t) \to 0$ as $t\to \infty$. Since $\dot{\tilde{q}} = -\Lambda \tilde{q} + s$, and $\Lambda \succ 0$, $\tilde{q}(t)$ will be the solution of a stable first-order system. Therefore, $\tilde{q}(t), \dot{\tilde{q}} \to 0$ as $t\in\infty$.

\section{Additional Experimental Results}
\label{sec:additional-fig}
In this section, we present additional results from our numerical evaluations.
The plots below illustrate the trajectory tracking performance of the GD-based adaptation method from~\cite{richards2023control} and our proposed MD-based adaptation for various disturbances from $w = 2 \; [m/s]$ to $w = 10 \; [m/s]$.
We observe that at lower wind speed $w = 2.0, 4.0 \; [m/s]$, both the GD-based adaptation and our proposed MD-based adaptation methods result in good tracking performance.
But under larger wind disturbance $w \ge 6.0 \; [m/s]$, our method (with $p = 2.2$) tracks the reference trajectory much more closely and sees very little degradation of performance from higher wind speed.
Recall that all of our sample trajectories are sampled with wind speed at most $w = 6.0 \; [m/s]$.
Therefore, these experiments demonstrate the superior generalization capabilities of our MD-based adaptation method.

\begin{figure*}[!ht]
    \centering
    \includegraphics[width=0.8\textwidth]{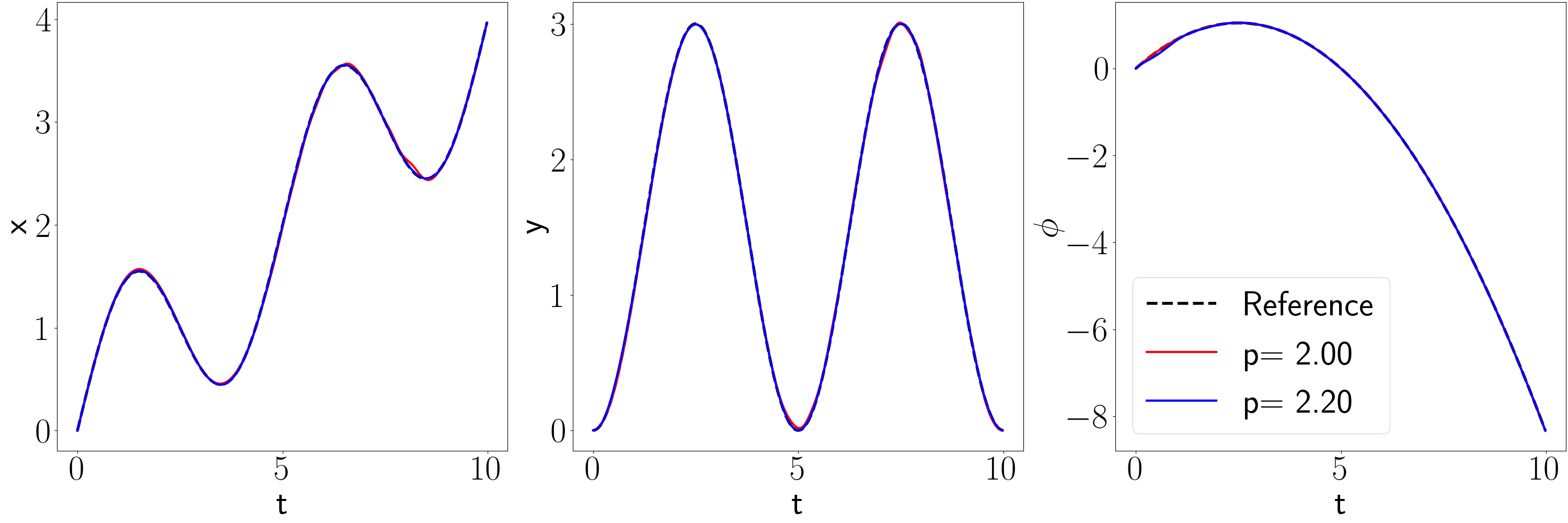}
    \caption{State time-histories under disturbance $w=2.0$ [$m/s$].}
\end{figure*}

\begin{figure*}[!ht]
    \centering
    \includegraphics[width=0.8\textwidth]{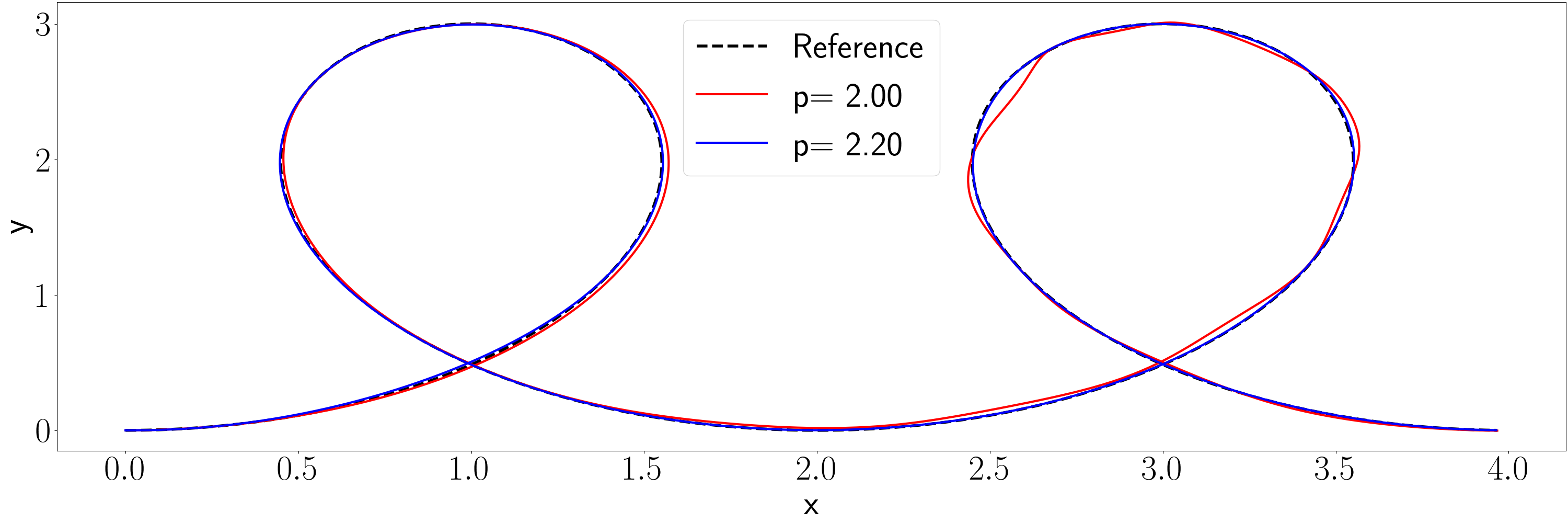}
    \caption{$x-y$ Phase plot in the case of $w=2.0$ [$m/s$].}
\end{figure*}

\begin{figure*}[!ht]
    \centering
    \includegraphics[width=0.8\textwidth]{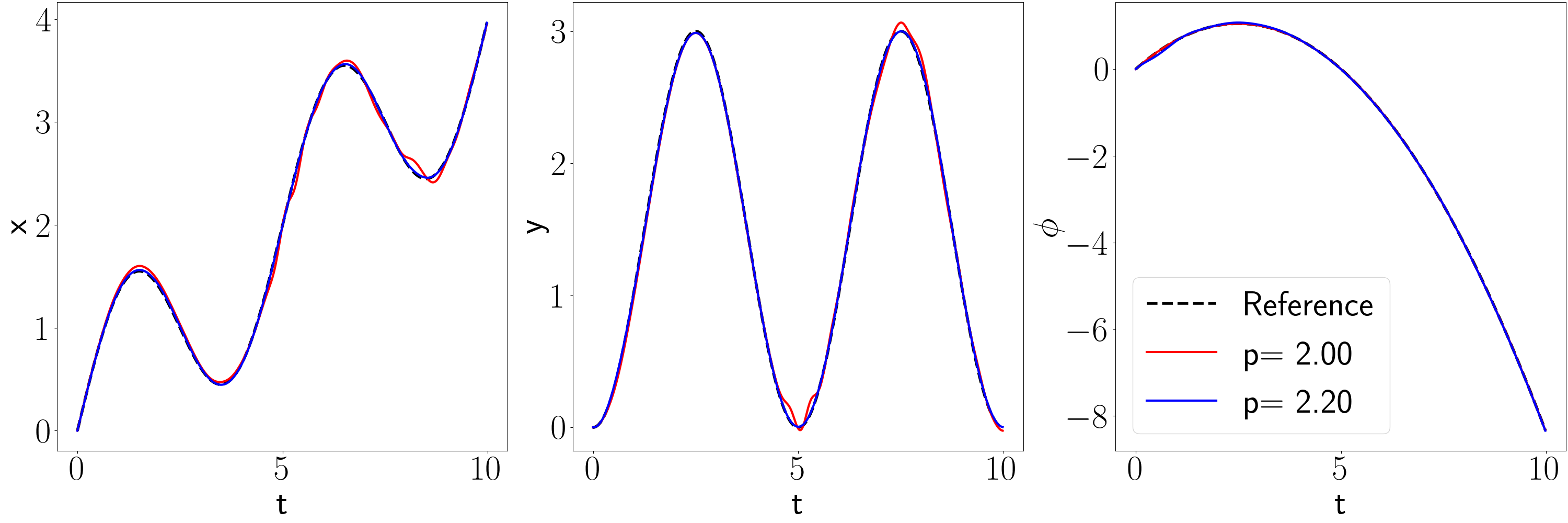}
    \caption{State time-histories under disturbance $w=4.0$ [$m/s$].}
\end{figure*}

\begin{figure*}[!ht]
    \centering
    \includegraphics[width=0.8\textwidth]{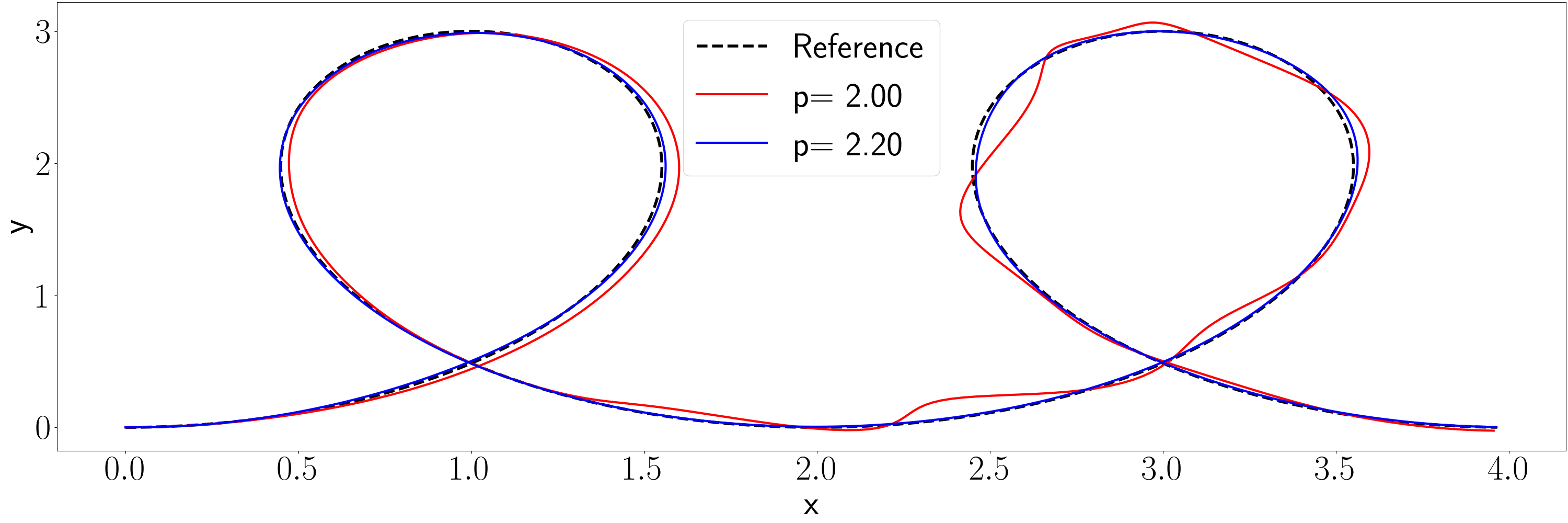}
    \caption{$x-y$ Phase plot in the case of $w=4.0$ [$m/s$].}
\end{figure*}

\begin{figure*}[!ht]
    \centering
    \includegraphics[width=0.8\textwidth]{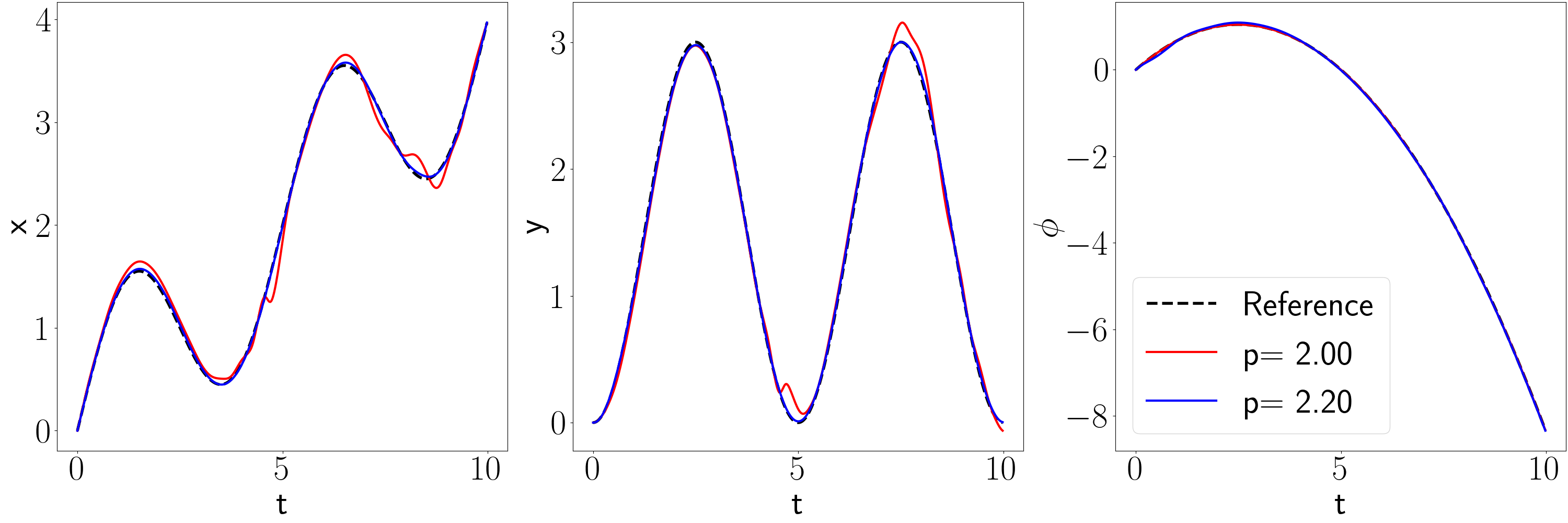}
    \caption{State time-histories under disturbance $w=6.0$ [$m/s$].}
\end{figure*}

\begin{figure*}[!ht]
    \centering
    \includegraphics[width=0.8\textwidth]{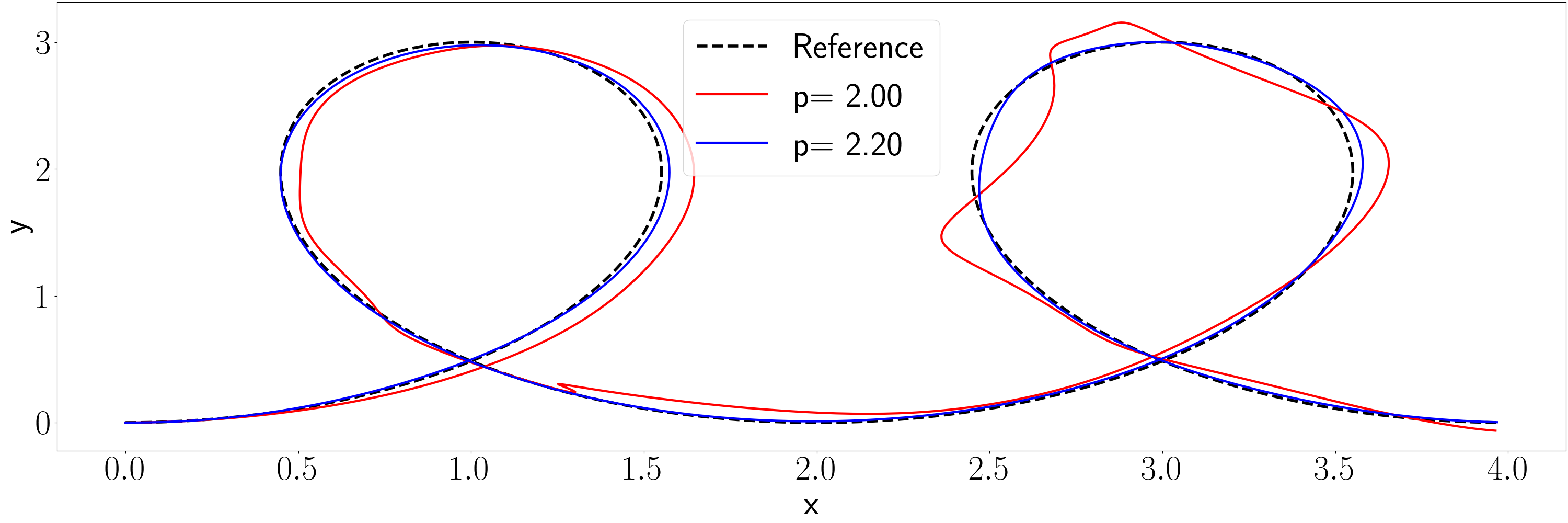}
    \caption{$x-y$ Phase plot in the case of $w=6.0$ [$m/s$].}
\end{figure*}

\begin{figure*}[!ht]
    \centering
    \includegraphics[width=0.8\textwidth]{plots/w_8.0_q_i.png}
    \caption{State time-histories under disturbance $w=8.0$ [$m/s$].}
\end{figure*}

\begin{figure*}[!ht]
    \centering
    \includegraphics[width=0.8\textwidth]{plots/w_8.0_phase.png}
    \caption{$x-y$ Phase plot in the case of $w=8.0$ [$m/s$].}
\end{figure*}

\begin{figure*}[!ht]
    \centering
    \includegraphics[width=0.8\textwidth]{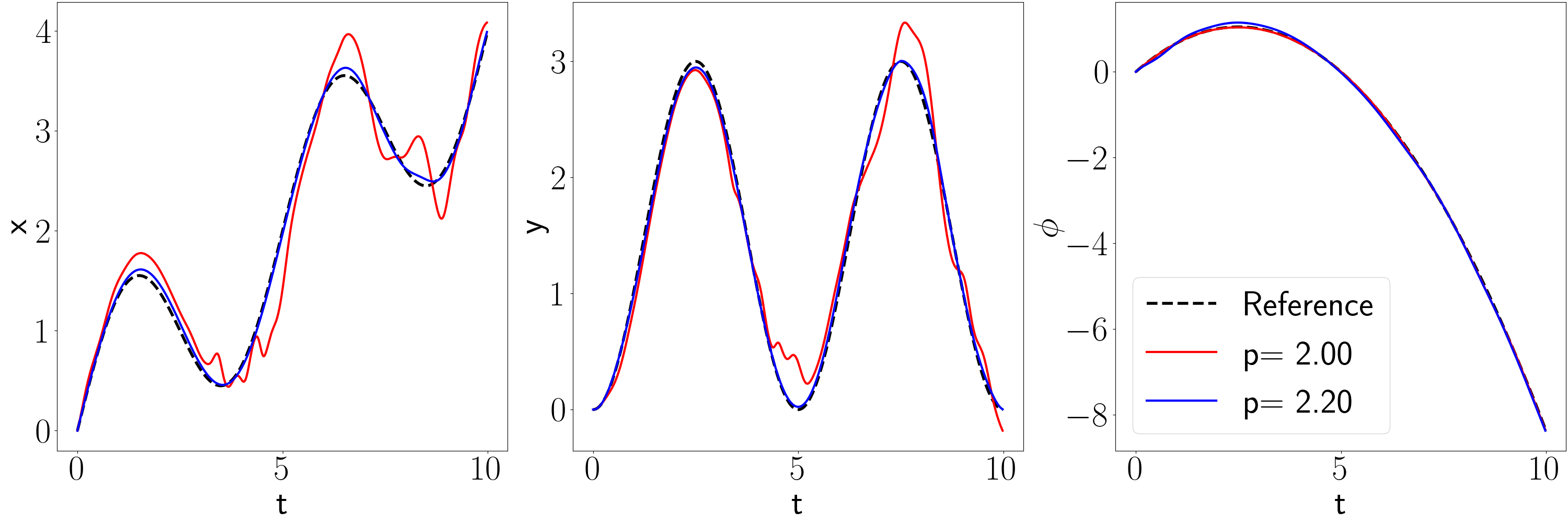}
    \caption{State time-histories under disturbance $w=10.0$ [$m/s$].}
\end{figure*}

\begin{figure*}[!ht]
    \centering
    \includegraphics[width=0.8\textwidth]{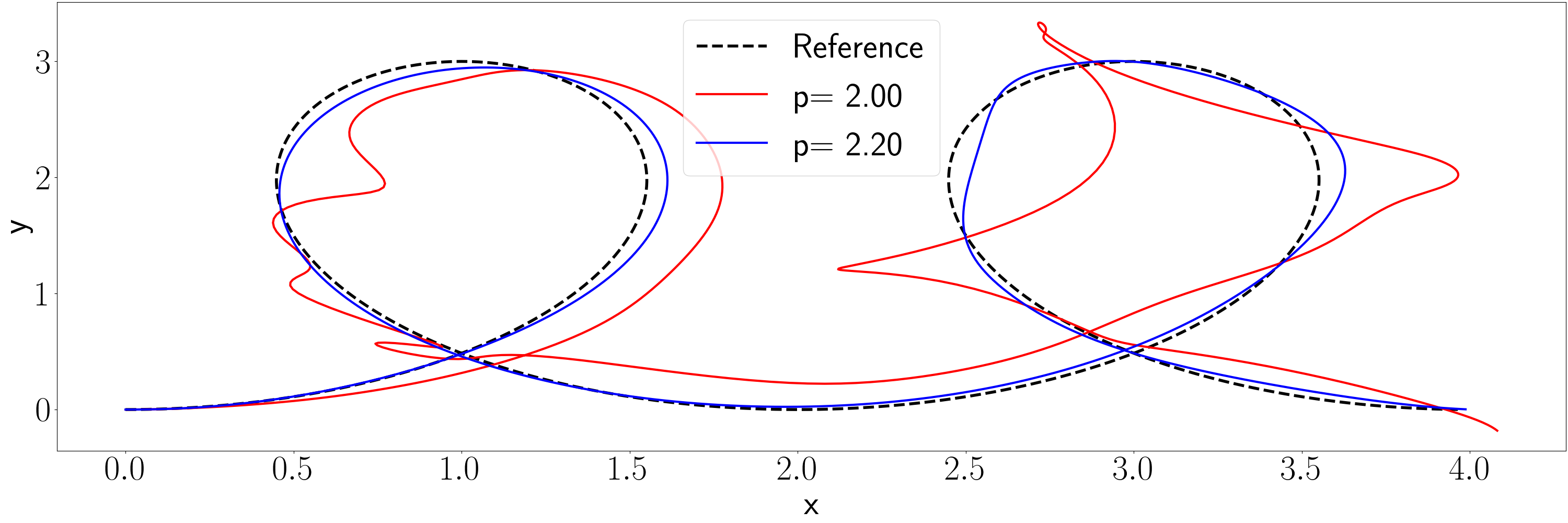}
    \caption{$x-y$ Phase plot in the case of $w=10.0$ [$m/s$].}
\end{figure*}

\end{document}